\newcommand{\pathblast}{\textsc{PathBlast}\xspace}
\newcommand{\networkblast}{\textsc{NetworkBlast}\xspace}
\newcommand{\graal}{\textsc{Graal}\xspace }
\newcommand{\isorank}{\textsc{IsoRank}\xspace}
\newcommand{\natalie}{\textsc{natalie}\xspace}
\newcommand{\natnewversion}{\textsc{natalie 2.0}\xspace}
\newcommand{\mawish}{\textsc{MaWISh}\xspace}
\newcommand{\graemlin}{\textsc{Graemlin}\xspace }
\newcommand{\submap}{\textsc{SubMAP}\xspace}
\DeclareMathOperator*{\argmax}{arg\,max}
\newcommand{\ie}{i.e.\ }
\newcommand{\obacht}[2]{}
\begin{document}

\title{Lagrangian Relaxation Applied to\\Sparse Global Network Alignment}
\author{
  Mohammed El-Kebir\inst{1,2,3} 
  \and Jaap Heringa\inst{2,3,4} 
  \and Gunnar W.\ Klau\inst{1,3}}

\titlerunning{Sparse Global Network Alignment}

\institute{
  Centrum Wiskunde \& Informatica, Life Sciences Group,
  Science Park 123, 1098~XG Amsterdam, the Netherlands, 
  \mailmohammedgunnar 
  \and Centre for Integrative Bioinformatics VU
  (IBIVU), VU University Amsterdam, De Boelelaan 1081A, 1081 HV
  Amsterdam, the Netherlands, \mailjaap 
  \and Netherlands Institute for
  Systems Biology, Amsterdam, the Netherlands
  \and Netherlands Bioinformatics Centre, Nijmegen, the Netherlands
}

\maketitle

\begin{abstract}
Data on molecular interactions is increasing at a tremendous
pace, while the development of solid methods for analyzing this network data is lagging
behind. This holds in particular for the field of comparative network
analysis, where one wants to identify commonalities between biological
networks.  Since biological functionality primarily operates at the
network level, there is a clear need for topology-aware comparison
methods.
In this paper we present a
method for global network alignment that is fast and robust, and can flexibly deal with
various scoring schemes taking both node-to-node correspondences as well as
network topologies into account. It is based on an integer linear
programming formulation, generalizing the well-studied quadratic
assignment problem. We obtain strong upper and
lower bounds for the problem by improving a Lagrangian relaxation
approach and introduce the software tool \natnewversion, a publicly available
implementation of our method. In an extensive
computational study on protein interaction networks for six different species,
we find that our new method outperforms alternative state-of-the-art methods with respect to
quality and running time.
\end{abstract}

\section{Introduction}


In the last decade, data on molecular interactions has increased at a tremendous
pace. For instance, the STRING database \cite{Szklarczyk2010}, which contains
protein protein interaction (PPI) data, grew from 261,033 proteins in 89
organisms in 2003 to 5,214,234 proteins in 1,133 organisms in May 2011, more
than doubling the number of proteins in the database every two years. The same
trends can be observed for other types of biological networks, including
metabolic, gene-regulatory, signal transduction and metagenomic networks, where
the latter can incorporate the excretion and uptake of organic compounds through,
for example, a microbial community \cite{Sharan2006,Kanehisa2006}.
In addition to the plethora of experimentally derived network data for many
species, also the structure and behavior of molecular networks have become
intensively studied over the last few years \cite{Alon2007}, leading to the
observation of many conserved features at the network level. However, the
development of solid methods for analyzing network data is lagging behind,
particularly in the field of comparative network analysis.
Here, one wants to identify commonalities between biological networks from
different strains or species, or derived form different conditions. Based
on the assumption that evolutionary conservation implies functional
significance, comparative approaches may help (i) improve the accuracy of data,
(ii) generate, investigate, and validate hypotheses, and (iii) transfer
functional annotations.
Until recently, the most common way of comparing two networks has been to
solely consider
node-to-node correspondences, for example by finding homologous
relationships between nodes (e.g. proteins in PPI networks) of either network,
while the topology of the two networks has not been taken into account. Since
biological functionality primarily operates at the network level, there is a
clear need for topology-aware comparison methods. In this paper we present a
network alignment method that is fast and robust, and can flexibly deal with
various scoring schemes taking both node-to-node correspondences as well as
network topologies into account.

\subsubsection*{Previous work.}
\label{sec:prev_work}
Network alignment establishes node correspondences based on both node-to-node
similarities and conserved topological information. Similar to sequence
alignment, \emph{local} network alignment aims at identifying one or more 
shared subnetworks, whereas \emph{global} network alignment addresses the
overall comparison of the complete input networks.

Over the last years a number of methods have been proposed for both global and local network
alignment, for example \pathblast \cite{Kelley2003}, \networkblast
\cite{Sharan2005}, 
\mawish \cite{Koyuturk2006}, \graemlin \cite{Flannick2006}, 
\isorank \cite{Singh2008}, \graal \cite{Kuchaiev2010},  and \submap
\cite{Ay2011}. \pathblast heuristically computes
high-scoring similar paths in two PPI networks. Detecting protein
complexes has been addressed with \networkblast by Sharan et al.\ \cite{Sharan2005},
where the authors introduce a probabilistic model and propose a heuristic greedy approach to search for
shared complexes. Koyut\"urk et al.\ \cite{Koyuturk2006} use a more
elaborate scoring scheme based on an evolutionary model to compute
local pairwise alignments of PPI networks. 
The \isorank algorithm by Singh et al.\ \cite{Singh2008} approaches
the global alignment problem by
preferably matching nodes which have a similar neighborhood, which
is elegantly solved as an eigenvalue problem. Kuchaiev et
al.\ \cite{Kuchaiev2010} take a similar approach. Their method \graal matches
nodes that share a similar distribution of so-called graphlets, which
are small connected non-isomorphic induced subgraphs. 

In this paper we focus on pairwise global network alignment, where an alignment
is scored by summing up individual scores of aligned node and interaction pairs.
Among the above mentioned methods, \isorank and \graal use a scoring model that
can be expressed in this manner.

\subsubsection*{Contribution.}
\label{sec:contribution}

We present an algorithm for global network alignment based on an integer linear
programming (ILP) formulation, generalizing the well-studied quadratic
assignment problem (QAP). We improve upon an existing Lagrangian relaxation
approach presented in previous work \cite{Klau2009} to obtain strong upper and
lower bounds for the problem. We exploit the closeness to QAP and generalize a
dual descent method for updating the Lagrangian multipliers to the generalized
problem. We have implemented the
revised algorithm from scratch as the software tool \natnewversion. In an extensive
computational study on protein interaction networks for six different species,
we compare \natnewversion to \graal and \isorank, evaluating the number of conserved
edges as well as functional coherence of the modules in terms of GO annotation.
We find that \natnewversion outperforms the alternative methods with respect to
quality and running time.
Our software tool \natnewversion as well as all data sets used in this
study are publicly available at \url{http://planet-lisa.net}.



\section{Preliminaries}

Given two simple graphs $G_1 = (V_1, E_1)$ and $G_2 = (V_2, E_2)$, an
\emph{alignment} ${a : V_1 \rightharpoondown V_2}$ is a \emph{partial injective
function} from $V_1$ to $V_2$. As such we have that an alignment relates every
node in $V_1$ to at most one node in $V_2$ and that conversely every node in
$V_2$ has at most one counterpart in $V_1$. An alignment is assigned a
real-valued \emph{score} using an additive scoring function $s$ defined as
follows:
\begin{equation}
\label{eq:alignment_score}
s(a) = \sum_{v \in V_1} c(v, a(v)) + \sum_{\substack{v,w \in V_1\\v<w}} w(v, a(v), w, a(w))
\end{equation}
where $c : V_1 \times V_2 \rightarrow \mathbb{R}$ is the score of aligning a
pair of nodes in $V_1$ and $V_2$ respectively. On the other hand, $w : V_1
\times V_2 \times V_1 \times V_2 \rightarrow \mathbb{R}$ allows for scoring
topological similarity. The problem of global pairwise network alignment (GNA)
is to find the highest scoring alignment $a^*$, i.e. $a^* = \argmax s(a)$. 
Figure~\ref{fig:example} shows an example.

\begin{figure}[btp]
  \center
  \includegraphics[scale=.8]{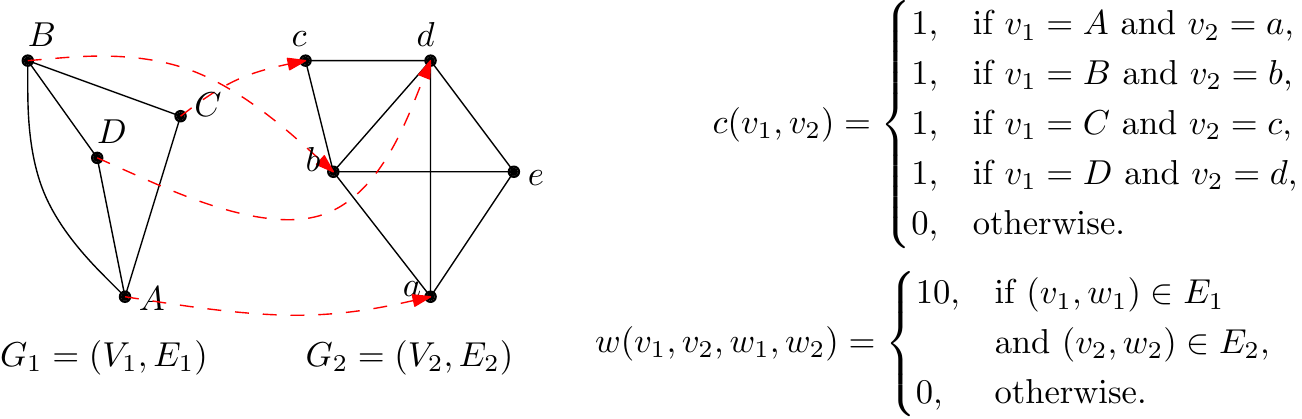}
  \caption{Example of a network alignment. With the given scoring function, the
  alignment has a score of $4+40=44$.}
  \label{fig:example}
\end{figure}

NP-hardness of GNA follows by a simple reduction from the decision problem
\textsc{Clique}, which asks whether there is a clique of cardinality at least
$k$ in a given simple graph $G = (V, E)$ \cite{Karp1972}. The corresponding GNA
instance concerns the alignment of the complete graph of $k$ vertices $K_k =
(V_k, E_k)$ with $G$ using the scoring function $s(a) = |\{ (v,w) \in E_k \mid
(a(v),a(w)) \in E\}|$. Since an alignment is injective, there is a clique of
cardinality at least $k$ if and only if the cost of the optimal alignment is
$\binom{k}{2}$.
The close relationship of GNA with the quadratic assignment problem is more easily
observed when formulating GNA as a mathematical program. Throughout the
remainder of the text we use dummy variables $i,j \in \{1, \ldots, |V_1|\}$ and
$k,l \in \{1, \ldots, |V_2|\}$ to denotes nodes in $V_1$ and $V_2$,
respectively. Let $C$ be a $|V_1| \times |V_2|$ matrix such that $c_{ik} =
c(i,k)$ and let $W$ be a $(|V_1| \times |V_2|) \times (|V_1| \times |V_2|)$
matrix whose entries $w_{ikjl}$ correspond to interaction scores $w(i,k,j,l)$.
Now we can formulate GNA as
\begin{alignat}{3}
\label{eq:obj_orig}
\tag{IQP}
\max_x      & \quad \sum_{i,k} c_{ik} x_{ik} + \sum_{\substack{i,j\\i < j}} 
                    \sum_{\substack{k,l\\k \neq l}} w_{ikjl} x_{ik} x_{jl}\\
\text{s.t.}
\label{eq:matching_j} & \quad \sum_{l} x_{jl} \leq 1 & \forall j\\
\label{eq:matching_l} & \quad \sum_{j} x_{jl} \leq 1 & \forall l\\
                      & \quad x_{ik}  \in \{0,1\} & \forall i,k
\end{alignat}
where the decision variable $x_{ik}$ indicates whether the $i$-th node in $V_1$
is aligned with the $k$-th node in $V_2$. The above formulation shares many
similarities with Lawler's formulation\cite{Lawler1963} of the QAP. However,
instead of finding an assignment we are interested in finding a matching, which
is reflected in constraints \eqref{eq:matching_j} and \eqref{eq:matching_l}
being inequalities rather than equalities. As can be seen in
\eqref{eq:alignment_score} we only consider the upper triangle of $W$
rather than the entire matrix. An analogous way of looking at this, is to
consider $W$ to be symmetric. This is usually not the case for QAP instances.
In addition, due to the fact that biological input graphs
are typically sparse, we have that $W$ is sparse as well. These differences allow us to
come up with an effective method of solving the problem as we will see in the
following.

\section{Method}
\label{sec:method}
The relaxation presented here follows the same lines as the one given by Adams
and Johnson for the QAP \cite{Adams1994}. We start by linearizing
\eqref{eq:obj_orig} by introducing binary variables $y_{ikjl}$
defined as $y_{ikjl} := x_{ik}x_{jl}$ and constraints $y_{ikjl} \leq x_{jl}$ and
$y_{ikjl} \leq x_{ik}$ for all $i \leq j$ and $k \neq l$. If we assume that all
entries in $W$ are positive, we do not need to enforce that $y_{ikjl} \geq
x_{ik} + x_{jl} - 1$. In Section~\ref{sec:Discussion} we will discuss this
assumption. Rather than using the aforementioned constraints, we make use of a
stronger set of constraints which we obtain by multiplying constraints
\eqref{eq:matching_j} and \eqref{eq:matching_l} by $x_{ik}$:

\begin{align}
\label{eq:lifted1}
\sum_{\substack{l\\l \neq k}} y_{ikjl} = \sum_{\substack{l\\l \neq k}} x_{ik}
x_{jl} \leq \sum_l x_{ik} x_{jl} \leq x_{ik}, & \quad \forall i,j,k, \; i < j\\
\label{eq:lifted2}
\sum_{\substack{j\\j > i}} y_{ikjl} = \sum_{\substack{j\\j > i}} x_{ik} x_{jl}
\leq \sum_j x_{ik} x_{jl} \leq x_{ik}, & \quad \forall i,k,l, \; k \neq l
\end{align}
We proceed by splitting the variable $y_{ikjl}$ (where $i < j$ and $k \neq l$).
In other words, we extend the objective function such that the counterpart of
$y_{ikjl}$ becomes $y_{jlik}$. This is accomplished by rewriting the dummy
constraint in \eqref{eq:lifted2} to $j \neq i$. In addition, we split the
weights: $w_{ikjl} = w_{jlik} = ({w^\prime_{ikjl}}/{2})$ where
$w^\prime_{ikjl}$ denotes the original weight. Furthermore, we require that the
counterparts of the split decision variables assume the same value, which
amounts to
\begin{alignat}{3}
\label{eq:ILP}
\tag{ILP} \max_{x,y}  & \quad \sum_{i,k} c_{ik} x_{ik} + \sum_{\substack{i,j\\i<j}}
\sum_{\substack{k,l\\k \neq l}} w_{ikjl} y_{ikjl} + \sum_{\substack{i,j\\i>j}}
\sum_{\substack{k,l\\k \neq l}} w_{ikjl} y_{ikjl}\hspace*{-6em}\\
\label{eq:ld_x1} \text{s.t.} & \quad \sum_{l} x_{jl} \leq 1 & \forall j\\
\label{eq:ld_x2}             & \quad \sum_{j} x_{jl} \leq 1 & \forall l\\
\label{eq:ld_y1}             & \quad \sum_{\substack{l\\l \neq k}} y_{ikjl} \leq x_{ik} & \forall i,j,k, \; i \neq j\\
\label{eq:ld_y2}             & \quad \sum_{\substack{j\\j \neq i}} y_{ikjl} \leq x_{ik} & \forall i,k,l, \; k \neq l\\
\label{eq:ld_eq} & \quad y_{ikjl} = y_{jlik}  & \forall i,j,k,l, \; i < j, k \neq l\\
\label{eq:ld_y}              & \quad y_{ikjl} \in \{0,1\} & \forall i,j,k,l, \;  i \neq j, k \neq l\\
\label{eq:ld_x}              & \quad x_{ik}  \in \{0,1\} & \forall i,k
\end{alignat}
We can solve the continuous relaxation of \eqref{eq:ILP} via its Lagrangian dual
by dualizing the linking constraints \eqref{eq:ld_eq} with multiplier $\lambda$:
\begin{alignat}{3}
\label{eq:LD}
\tag{LD} \min_\lambda & \quad Z_\mathrm{LD}(\lambda)\enspace,
\end{alignat}
where $Z_\mathrm{LD}(\lambda)$ equals
\begin{alignat}{3}
\nonumber
                           & \, \max_{x,y} \quad \sum_{i,k} c_{ik} x_{ik} + \sum_{\substack{i,j\\i<j}} \sum_{\substack{k,l\\k \neq l}} (w_{ikjl} + \lambda_{ikjl}) y_{ikjl} + \sum_{\substack{i,j\\i>j}} \sum_{\substack{k,l\\k \neq l}} (w_{ikjl} - \lambda_{jlik}) y_{ikjl}\\
\nonumber & \,\,\,\, \text{s.t.} \quad  \mbox{\eqref{eq:ld_x1}, \eqref{eq:ld_x2}, \eqref{eq:ld_y1}, \eqref{eq:ld_y2}, \eqref{eq:ld_y} and \eqref{eq:ld_x}}
\end{alignat}
Now that the linking constraints have been dualized, one can observe that the
remaining constraints decompose the variables into $|V_1||V_2|$
disjoint groups, where variables across groups are not linked by any constraint, and where each
group contains a variable $x_{ik}$ and variables $y_{ikjl}$ for $j \neq i$ and
$l \neq k$. Hence, we have 
\begin{alignat}{3}
\tag{LD$_\lambda$} Z_\mathrm{LD}(\lambda) \, = \, \max_x      & \quad \sum_{i,k} [c_{ik} + v_{ik}(\lambda)] x_{ik}\\
\text{s.t.} & \quad \sum_{l} x_{jl} \leq 1 & \quad \forall j\\
            & \quad \sum_{j} x_{jl} \leq 1 & \forall l\\
            & \quad x_{ik}  \in \{0,1\} & \quad\quad \forall i,k
\end{alignat}
which corresponds to a maximum weight bipartite matching problem on the so-called \emph{alignment graph} $G_m = (V_1 \cup V_2, E_m)$. In the general
case $G_m$ is a complete bipartite graph, \ie $E_m = \{(i, k) \mid i \in V_1, v_2
\in V_2\}$. However, by exploiting biological knowledge one can make $G_m$ more
sparse by excluding biologically-unlikely edges (see
Section~\ref{sec:Experiments}). For the global problem, the weight of a matching
edge $(i,k)$ is set to $c_{ik} + v_{ik}(\lambda)$, where the latter term is
computed as
\begin{alignat}{3}
\label{eq:localprob}
\tag{LD$^{ik}_\lambda$} v_{ik}(\lambda) \, = \, \max_{y} &
\sum_{\substack{j\\j>i}} \sum_{\substack{l\\l \neq k}} (w_{ikjl} +
\lambda_{ikjl}) y_{ikjl} + \sum_{\substack{j\\j<i}} \sum_{\substack{l\\l \neq
k}} (w_{ikjl} - \lambda_{jlik}) y_{ikjl}\hspace*{-3em}\\
\text{s.t.} & \quad \sum_{\substack{l\\l \neq k}} y_{ikjl} \leq 1 & \quad \forall j, \; j \neq i\\
            & \quad \sum_{\substack{j\\j \neq i}} y_{ikjl} \leq 1 & \quad \forall l, \; l \neq k\\
            & \quad y_{ikjl} \in \{0,1\} & \quad \forall j, l.
\end{alignat}
Again, this is a maximum weight bipartite matching problem on the same alignment
graph but excluding edges incident to either $i$ or $k$ and using different edge
weights: the weight of an edge $(j,l)$ is $w_{ikjl} + \lambda_{ikjl}$ if $j >
i$, or $w_{ikjl} - \lambda_{jlik}$ if $j < i$. 
So in order to compute $Z_\mathrm{LD}(\lambda)$, we need to solve a total number
of ${|V_1||V_2| + 1}$ maximum weight bipartite matching problems, which, using the
Hungarian algorithm \cite{Kuhn1953,Munkres1957} can be done in $O(n^5)$ time,
where $n = \max(|V_1|,|V_2|)$. In case the alignment graph is sparse, \ie
$O(|E_m|) = O(n)$, $Z_\mathrm{LD}(\lambda)$ can be computed in $O(n^4 \log n)$
time using the successive shortest path variant of the Hungarian algorithm
\cite{Edmonds1972}.
It is important to note that for any $\lambda$, $Z_\mathrm{LD}(\lambda)$ is an
upper bound on the score of an optimal alignment. This is because any alignment
$a$ is feasible to $Z_\mathrm{LD}(\lambda)$ and does not violate the original
linking constraints and therefore has an objective value equal to $s(a)$. In
particular, the optimal alignment $a^*$ is also feasible to
$Z_\mathrm{LD}(\lambda)$ and hence $a^* \leq Z_\mathrm{LD}(\lambda)$. Since the
two sets of problems resulting from the decomposition both have the integrality
property \cite{edmonds1965}, the smallest upper bound we can achieve equals the linear programming
(LP) bound of the continuous relaxation of \eqref{eq:ILP} \cite{Guignard2003}.
However, computing the smallest upper bound by finding suitable multipliers is
much faster than solving the corresponding LP.
Given solution $(x,y)$ to $Z_\mathrm{LD}(\lambda)$, we obtain a lower bound on
$s(a^*)$, denoted $Z_\mathrm{lb}(\lambda)$, by considering the score of the
alignment encoded in $x$.

\subsection{Solving Strategies}

In this section we will discuss strategies for identifying Lagrangian
multipliers $\lambda$ that yield an as small as possible gap between the upper
and lower bound resulting from the solution to $Z_\mathrm{LD}(\lambda)$.

\subsubsection{Subgradient optimization.} We start by discussing subgradient
optimization, which is originally due to Held and Karp \cite{Held1971}. The idea
is to generate a sequence  $\lambda^0, \lambda^1, \ldots$ of Lagrangian
multiplier vectors starting from $\lambda^0 = \mathbf{0}$ as follows:
\begin{align}
\lambda^{t+1}_{ikjl} = \lambda^t_{ikjl} - \frac{\alpha \cdot (Z_\mathrm{LD}(\lambda) -
Z_\mathrm{lb}(\lambda))}{\|g(\lambda^t)\|^2} g(\lambda^t_{ikjl}) & \quad\quad \forall
i,j,k,l, \; i < j, k \neq l
\end{align}
where $g(\lambda^t_{ikjl})$ corresponds to the subgradient of multiplier
$\lambda^t_{ikjl}$, \ie $g(\lambda^t_{ikjl}) = y_{ikjl} - y_{jlik}$, and $\alpha$
is the step size parameter. Initially $\alpha$ is set to $1$ and it is halved if
neither  $Z_\mathrm{LD}(\lambda)$ nor $Z_\mathrm{lb}(\lambda)$ have improved for
over $N$ consecutive iterations. Conversely, $\alpha$ is doubled if $M$ times in
a row there was an improvement in either $Z_\mathrm{LD}(\lambda)$ or
$Z_\mathrm{lb}(\lambda)$ \cite{Caprara1999}. In case all subgradients are zero, the optimal
solution has been found and the scheme terminates. Note that this is not
guaranteed to happen. Therefore we abort the scheme after exceeding a time limit or a
pre-specified number of iterations. In addition, we terminate if $\alpha$ has
dropped below machine precision. Algorithm~\ref{alg:subgradient} gives
the pseudo code of this procedure.

\begin{algorithm2e}
\caption{\textsc{SubgradientOpt}$(\lambda, M, N)$}
\label{alg:subgradient}
$\alpha \leftarrow 1$; $n \leftarrow N$; $m \leftarrow M$\\
$[\mathrm{LB}^*, \mathrm{UB}^*] \leftarrow [Z_\mathrm{lb}(\lambda),
  Z_\mathrm{LD}(\lambda)]$\\
\While{$g(\lambda) \neq 0$}
{
  $\lambda \leftarrow \lambda - \frac{\alpha(Z_\mathrm{LD}(\lambda) -
Z_\mathrm{lb}(\lambda))}{\|g(\lambda^t)\|^2} g(\lambda^t)$\\
  \lIf {$[\mathrm{LB}^*, \mathrm{UB}^*] \setminus [Z_\mathrm{lb}(\lambda),
  Z_\mathrm{LD}(\lambda)] = \emptyset$}
  {
    $n \leftarrow n - 1$
  }\\
  \Else
  {
    $\mathrm{LB}^* \leftarrow \max[\mathrm{LB}^*, Z_\mathrm{lb}(\lambda)]$\\
    $\mathrm{UB}^* \leftarrow \min[\mathrm{UB}^*, Z_\mathrm{LD}(\lambda)]$\\
    $m \leftarrow m - 1$
  }
  \lIf {$n = 0$} {$\alpha \leftarrow \alpha / 2$; $n \leftarrow N$}\\
  \lIf {$m = 0$} {$\alpha \leftarrow 2\alpha$; $m \leftarrow M$}
}
\Return $[\mathrm{LB}^*, \mathrm{UB}^*]$
\end{algorithm2e}

\subsubsection{Dual descent.}
In this section we derive a dual descent method which is an extension of the one
presented in \cite{Adams1994}. The dual descent method takes as a starting point
the dual of $Z_\mathrm{LD}(\lambda)$:
\begin{alignat}{3}
\label{eq:dual_ld_global}
Z_\mathrm{LD}(\lambda) \, = \, \min_{\alpha,\beta} & \quad \sum_{i} \alpha_i + \sum_k \beta_k\\
\text{s.t.} & \quad \alpha_i + \beta_k \geq c_{ik} + v_{ik}(\lambda) & \quad\quad\quad \forall i, k\\
            & \quad \alpha_i \geq 0 & \forall i\\
            & \quad \beta_k \geq 0 & \forall k
\end{alignat}
where the dual of $v_{ik}(\lambda)$ is
\begin{alignat}{3}
\label{eq:dual_ld_local}
v_{ik}(\lambda) \, = \, \min_{\mu,\nu} & \quad \sum_{\substack{j\\j \neq i}} \mu^{ik}_j + \sum_{\substack{l\\l \neq k}} \nu^{ik}_l\\
\label{eq:dual_ld_local_1}
\text{s.t.} & \quad \mu^{ik}_j + \nu^{ik}_l \geq w_{ikjl} + \lambda_{ikjl} & \quad\quad\quad \forall j,l, \; j > i, l \neq k\\
\label{eq:dual_ld_local_2}
            & \quad \mu^{ik}_j + \nu^{ik}_l \geq w_{ikjl} - \lambda_{jlik} & \quad \forall j,l, \; j < i, l \neq k\\
\label{eq:dual_ld_local_3}
            & \quad \mu^{ik}_j \geq 0 & \forall j\\
\label{eq:dual_ld_local_4}
            & \quad \nu^{ik}_l \geq 0 & \forall l.
\end{alignat}
Suppose that for a given $\lambda^t$ we have computed dual variables $(\alpha,
\beta)$ solving \eqref{eq:dual_ld_global} with objective value
$Z_\mathrm{LD}(\lambda^t)$, as well as dual variables $(\mu^{ik}, \nu^{ik})$
yielding values $v_{ik}(\lambda)$ to linear programs \eqref{eq:dual_ld_local}.
The goal now is to find $\lambda^{t+1}$ such that the resulting bound is better
or just as good, i.e.\ ${Z_\mathrm{LD}(\lambda^{t+1}) \leq
Z_\mathrm{LD}(\lambda^t)}$. We prevent the bound from increasing, by ensuring
that the dual variables $(\alpha, \beta)$ remain feasible to
\eqref{eq:dual_ld_global}. This we can achieve by considering the slacks:
$\pi_{ik}(\lambda) = \alpha_i + \beta_k - c_{ik} - v_{ik}(\lambda).$
So for $(\alpha,\beta)$ to remain feasible, we can only allow every
$v_{ik}(\lambda^t)$ to increase by as much as
$\pi_{ik}(\lambda^t)$. We can achieve such an increase by considering
linear programs \eqref{eq:dual_ld_local} and their slacks defined as 
\begin{align}
  \gamma_{ikjl}(\lambda) & = \begin{cases} \mu^{ik}_j + \nu^{ik}_l - w_{ikjl} +
  \lambda_{ikjl}, & \mbox{if $j > i$,}\\
                                           \mu^{ik}_j + \nu^{ik}_l - w_{ikjl} -
                                           \lambda_{jlik}, & \mbox{if $j <
                                           i$,}\end{cases} & \forall j,l, \; j
                                           \neq i, l \neq k,
\end{align}
and update the multipliers in the following way.
\begin{lemma}
\label{lem:dualdescent}
The adjustment scheme below yields solutions to linear programs \eqref{eq:dual_ld_local} with objective values $v_{ik}(\lambda^{t+1})$ at most $\pi_{ik}(\lambda^t) + v_{ik}(\lambda^t)$ for all $i,k$.
\begin{equation}
  \label{eq:mult_update}
  \begin{split}
    \lambda^{t+1}_{ikjl} = \lambda^t_{ikjl} & + \varphi_{ikjl}\left[ \gamma_{ikjl}(\lambda^t) + \tau_{ik} \left( \frac{1}{2(n_1 - 1)} + \frac{1}{2(n_2 - 1)}\right) \pi_{ik}(\lambda^t)\right]\\
                                            & - \varphi_{jlik}\left[ \gamma_{jlik}(\lambda^t) + \tau_{jl} \left( \frac{1}{2(n_1 - 1)} + \frac{1}{2(n_2 - 1)}\right) \pi_{jl}(\lambda^t)\right]
  \end{split}
\end{equation}
for all $j,l, \, i<j, k \neq l$, where $n_1 = |V_1|$, $n_2 = |V_2|$, and $0 \leq \varphi_{ikjl}, \tau_{jl} \leq 1$ are parameters. 
\end{lemma}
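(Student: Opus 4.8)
The plan is to prove the bound for each pair $(i,k)$ separately by exhibiting an explicit \emph{feasible} point of the dual linear program \eqref{eq:dual_ld_local} for the perturbed multipliers $\lambda^{t+1}$ whose objective value is at most $v_{ik}(\lambda^t)+\tau_{ik}\pi_{ik}(\lambda^t)$. Since $v_{ik}(\lambda^{t+1})$ is, by definition, the optimal value of that minimization program, it is dominated by the objective of any feasible point, which together with $0\le\tau_{ik}\le1$ and $\pi_{ik}(\lambda^t)\ge0$ gives exactly the claimed inequality. Throughout, fix $(i,k)$, abbreviate $\rho=\tfrac{1}{2(n_1-1)}+\tfrac{1}{2(n_2-1)}$ and $\pi=\pi_{ik}(\lambda^t)$, and recall that $\pi\ge0$ because $(\alpha,\beta)$ is feasible for \eqref{eq:dual_ld_global} and that every slack $\gamma_{ikjl}(\lambda^t)\ge0$ because $(\mu^{ik},\nu^{ik})$ is feasible for \eqref{eq:dual_ld_local}; here $\gamma_{ikjl}(\lambda^t)$ is the slack of the constraint for $y_{ikjl}$, i.e.\ $\mu^{ik}_j+\nu^{ik}_l$ minus the $\lambda^t$-coefficient of $y_{ikjl}$ in \eqref{eq:localprob}.

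The first step is to read off from \eqref{eq:mult_update} how the objective coefficient of $y_{ikjl}$ in the local problem \eqref{eq:localprob} — equivalently, the right-hand side of the corresponding constraint in \eqref{eq:dual_ld_local} — changes when passing from $\lambda^t$ to $\lambda^{t+1}$. Applying \eqref{eq:mult_update} with the two index pairs in the order dictated by whether $j>i$ or $j<i$, one checks that in \emph{both} cases this change equals
\[
\varphi_{ikjl}\bigl[\gamma_{ikjl}(\lambda^t)+\tau_{ik}\rho\,\pi\bigr]\;-\;\varphi_{jlik}\bigl[\gamma_{jlik}(\lambda^t)+\tau_{jl}\rho\,\pi_{jl}(\lambda^t)\bigr],
\]
where $\gamma_{ikjl}(\lambda^t)$ is the slack just mentioned in group $(i,k)$ and $\gamma_{jlik}(\lambda^t)$ the analogous slack in group $(j,l)$. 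The key simplification — and the reason the update is split into two pieces — is that the second bracket, the contribution from the \emph{other} group $(j,l)$, is nonnegative and may be discarded; combined with $\varphi_{ikjl}\le1$ this bounds the change by $\gamma_{ikjl}(\lambda^t)+\tau_{ik}\rho\,\pi$, so the new coefficient of $y_{ikjl}$ is at most $\bigl(\mu^{ik}_j+\nu^{ik}_l\bigr)+\tau_{ik}\rho\,\pi$.

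The second step is to define the candidate dual point by a \emph{uniform} shift, $\tilde\mu^{ik}_j=\mu^{ik}_j+\tfrac{\tau_{ik}\pi}{2(n_1-1)}$ for all $j\ne i$ and $\tilde\nu^{ik}_l=\nu^{ik}_l+\tfrac{\tau_{ik}\pi}{2(n_2-1)}$ for all $l\ne k$. Then $\tilde\mu^{ik}_j+\tilde\nu^{ik}_l=\mu^{ik}_j+\nu^{ik}_l+\tau_{ik}\rho\,\pi$, which by the previous step dominates the new coefficient of $y_{ikjl}$ for every admissible $(j,l)$, so constraints \eqref{eq:dual_ld_local_1}–\eqref{eq:dual_ld_local_2} hold at $\lambda^{t+1}$; nonnegativity \eqref{eq:dual_ld_local_3}–\eqref{eq:dual_ld_local_4} is immediate since $\tau_{ik},\pi\ge0$. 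Summing over the $n_1-1$ indices $j\ne i$ and $n_2-1$ indices $l\ne k$ and using optimality of $(\mu^{ik},\nu^{ik})$ at $\lambda^t$, the objective value is $\bigl(\sum_{j\ne i}\mu^{ik}_j+\sum_{l\ne k}\nu^{ik}_l\bigr)+\tfrac{\tau_{ik}\pi}{2}+\tfrac{\tau_{ik}\pi}{2}=v_{ik}(\lambda^t)+\tau_{ik}\pi$, which is at most $v_{ik}(\lambda^t)+\pi$; feasibility plus minimality of $v_{ik}(\lambda^{t+1})$ finishes the argument.

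The one genuinely delicate point is the index bookkeeping: each multiplier $\lambda_{ikjl}$ with $i<j$ appears in the two subproblems $v_{ik}$ and $v_{jl}$ with opposite signs, so the single anti-symmetric rule \eqref{eq:mult_update} must simultaneously be a valid dual adjustment in \emph{every} group. This is what the split into an ``own'' part (the $\gamma_{ikjl},\pi_{ik}$ terms, absorbed by the slack and the uniform $\mu/\nu$-shift of group $(i,k)$) and a ``cross'' part (the $\gamma_{jlik},\pi_{jl}$ terms, which enter group $(i,k)$ with the benign sign and are dropped there, while playing the reverse roles in group $(j,l)$) is for; verifying that the cross part indeed has the favorable sign in both orientations $j>i$ and $j<i$ is the step I expect to need the most care, and the parameters $\varphi,\tau\in[0,1]$ are used only through $\varphi_{ikjl}\le1$, $\varphi_{jlik}\ge0$ and $\tau_{ik}\le1$. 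Everything else is the routine verification sketched above.
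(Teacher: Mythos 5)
Your proof is correct and follows essentially the same route as the paper: exhibit the uniformly shifted dual solution $\mu^{ik}_j+\tfrac{\cdot}{2(n_1-1)}$, $\nu^{ik}_l+\tfrac{\cdot}{2(n_2-1)}$, verify feasibility at $\lambda^{t+1}$ by bounding the change in each constraint's right-hand side after discarding the nonnegative cross-group bracket, and invoke minimality of $v_{ik}(\lambda^{t+1})$. The only (immaterial) difference is that you retain the factor $\tau_{ik}$ in the shift, yielding the marginally sharper intermediate bound $v_{ik}(\lambda^t)+\tau_{ik}\pi_{ik}(\lambda^t)$ before relaxing to the stated one.
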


\begin{proof}
We prove the lemma by showing that for any $i,k$ there exists a feasible solution $(\mu^{\prime ik},\nu^{\prime ik})$ to \eqref{eq:dual_ld_local} whose objective value $v_{ik}(\lambda^{t+1})$ is at most $\pi_{ik}(\lambda^t) + v_{ik}(\lambda^t)$. Let $(\mu^{ik}, \nu^{ik})$ be the solution to \eqref{eq:dual_ld_local} given multipliers $\lambda^t$. We claim that setting
\begin{alignat*}{3}
\mu^{\prime ik}_j = \mu^{ik}_j + \frac{\pi_{ik}(\lambda^t)}{2(n_1 - 1)} & \quad\quad \forall j,\;  j \neq i\\
\nu^{\prime ik}_l = \nu^{ik}_j + \frac{\pi_{ik}(\lambda^t)}{2(n_2 - 1)} & \quad\quad \forall l,\;  l \neq k,
\end{alignat*}
results in a feasible solution to \eqref{eq:dual_ld_local} given multipliers $\lambda^{t+1}$. We start by showing that constraints \eqref{eq:dual_ld_local_1} and \eqref{eq:dual_ld_local_2} are satisfied. From \eqref{eq:mult_update} the following bounds on $\lambda^{t+1}$ follow.
\begin{align*}
\lambda^t_{ikjl} - \gamma_{jlik}(\lambda^t) - \left( \frac{1}{2(n_1 - 1)} +
\frac{1}{2(n_2 - 1)}\right) \pi_{jl}(\lambda^t) \leq \lambda^{t+1}_{ikjl} & \;\quad \forall j,l, \; j < i, l \neq k\\
\lambda^{t+1}_{ikjl} \leq \lambda^t_{ikjl} + \gamma_{ikjl}(\lambda^t) + \left(
\frac{1}{2(n_1 - 1)} + \frac{1}{2(n_2 - 1)}\right) \pi_{ik}(\lambda^t) & \;\quad \forall j,l, \; j < i, l \neq k.
\end{align*}
Therefore we have that the following inequalities imply constraints
\eqref{eq:dual_ld_local_1} and \eqref{eq:dual_ld_local_2} for all
$j,l$, $j > i$, $l \neq k$:
\begin{alignat*}{3}
\mu^{\prime ik}_j + \nu^{\prime ik}_l \geq w_{ikjl} + \lambda^t_{ikjl} + \gamma_{ikjl}(\lambda^t) + \left( \frac{1}{2(n_1 - 1)} + \frac{1}{2(n_2 - 1)}\right) \pi_{ik}(\lambda^t)
\end{alignat*}
and for all $j,l$, $j < i$, $l \neq k$
\begin{alignat*}{3}
\mu^{\prime ik}_j + \nu^{\prime ik}_l \geq w_{ikjl} - \lambda^t_{jlik} + \gamma_{ikjl}(\lambda^t) + \left( \frac{1}{2(n_1 - 1)} + \frac{1}{2(n_2 - 1)}\right) \pi_{ik}(\lambda^t).
\end{alignat*}
Constraints \eqref{eq:dual_ld_local_1} and \eqref{eq:dual_ld_local_2}
are indeed implied, as,  for all $j,l$, $j > i$, $l \neq k$,
\begin{alignat*}{3}
\mu^{\prime ik}_j + \nu^{\prime ik}_l & = \mu^{ik}_j + \nu^{ik}_l + \left( \frac{1}{2(n_1 - 1)} + \frac{1}{2(n_2 - 1)}\right) \pi_{ik}(\lambda^t)\\
                                      & \geq w_{ikjl} + \lambda^t_{ikjl} + \gamma_{ikjl}(\lambda^t) + \left( \frac{1}{2(n_1 - 1)} + \frac{1}{2(n_2 - 1)}\right) \pi_{ik}(\lambda^t)
\end{alignat*}
and for all $j,l$, $j < i$, $l \neq k$
\begin{alignat*}{3}
\mu^{\prime ik}_j + \nu^{\prime ik}_l & = \mu^{ik}_j + \nu^{ik}_l + \left( \frac{1}{2(n_1 - 1)} + \frac{1}{2(n_2 - 1)}\right) \pi_{ik}(\lambda^t) \\
                                      & \geq w_{ikjl} - \lambda^t_{jlik} + \gamma_{ikjl}(\lambda^t) + \left( \frac{1}{2(n_1 - 1)} + \frac{1}{2(n_2 - 1)}\right) \pi_{ik}(\lambda^t).
\end{alignat*}
Since $\mu^{ik}_j, \nu^{ik}_l \geq 0$ ($\forall j,l, \; j \neq i, l \neq k$) and by definition $\pi_{ik}(\lambda^t) \geq 0$, constraints \eqref{eq:dual_ld_local_3} and \eqref{eq:dual_ld_local_4} are satisfied as well. 
The objective value of $(\mu^{\prime ik},\nu^{\prime ik})$ is given by
\begin{equation*}
\sum_{\substack{j\\j \neq i}} \mu^{\prime ik}_j + \sum_{\substack{l\\l \neq k}} \nu^{\prime ik}_l = \sum_{\substack{j\\j \neq i}} \mu^{ik}_j + \sum_{\substack{l\\l \neq k}} \nu^{ik}_l + \pi_{ik}(\lambda^t) = v_{ik}(\lambda^t) + \pi_{ik}(\lambda^t).
\end{equation*}
Since \eqref{eq:dual_ld_local} are minimization problems and there exist, for all $i,k$, feasible solutions with objective values $v_{ik}(\lambda^t) + \pi_{ik}(\lambda^t)$, we can conclude that the objective values of the solutions are bounded by this quantity. Hence, the lemma follows.
\end{proof}

We use $\varphi = 0.5$, $\tau =
1$, and perform the dual descent method $L$ successive times (see Algorithm~\ref{alg:dualdescent}).

\begin{algorithm2e}
\caption{\textsc{DualDescent}$(\lambda, L)$}
\label{alg:dualdescent}
$\varphi \leftarrow 0.5$; 
$[\mathrm{LB}^*, \mathrm{UB}^*] \leftarrow [Z_\mathrm{lb}(\lambda),
  Z_\mathrm{LD}(\lambda)]$\\
\For {$n \leftarrow 1$ \KwTo $L$}
{
  \ForEach{$i,k,j,l, \, i < j, k \neq l$}
  {
    $\lambda_{ikjl} \leftarrow \lambda_{ikjl} + \varphi(\gamma_{ikjl} +
    \frac{\pi_{ik}(\lambda)}{2(n_1 - 1)} + \frac{\pi_{ik}(\lambda)}{2(n_2 -
    1)})) - \varphi(\gamma_{jlik} +
    \frac{\pi_{jl}(\lambda)}{2(n_1 - 1)} + \frac{\pi_{jl}(\lambda)}{2(n_2 -
    1)}))$
  }
  $\mathrm{LB}^* \leftarrow \max[\mathrm{LB}^*, Z_\mathrm{lb}(\lambda)]$\\
  $\mathrm{UB}^* \leftarrow Z_\mathrm{LD}(\lambda)$\\
}
\Return $[\mathrm{LB}^*, \mathrm{UB}^*]$
\end{algorithm2e}


\subsubsection{Overall method.}
Our overall method combines both the subgradient optimization and dual descent
method. We do this performing the subgradient method until termination and then
switching over to the dual descent method. This procedure is repeated
$K$ times (see Algorithm~\ref{alg:gnaopt}).

\begin{algorithm2e}
\caption{\textsc{Natalie}$(K, L, M, N)$}
\label{alg:gnaopt}
$\lambda \leftarrow \mathbf{0}$; 
$[\mathrm{LB}^*, \mathrm{UB}^*] \leftarrow [0, \infty]$\\
\For {$k \leftarrow 1$ \KwTo $K$}
{
  $[\mathrm{LB}^*, \mathrm{UB}^*] \leftarrow 
    \textsc{SubgradientOpt}(\lambda, M,  N) \cap [\mathrm{LB}^*, \mathrm{UB}^*]$\\
  $[\mathrm{LB}^*, \mathrm{UB}^*] \leftarrow 
    \textsc{DualDescent}(\lambda, L) \cap [\mathrm{LB}^*, \mathrm{UB}^*]$\\
}
\Return $[\mathrm{LB}^*, \mathrm{UB}^*]$
\end{algorithm2e}

We implemented \natalie in C\raisebox{0.5ex}{\small ++} using the LEMON graph
library (\url{http://lemon.cs.elte.hu/}).
The successive shortest path
algorithm for maximum weight bipartite matching was implemented and contributed
to LEMON. Special care was taken to deal with the inherent numerical instability
of floating point numbers. Our implementation supports both the GraphML and GML
graph formats. Rather than using one big alignment graph, we store and use a
different alignment graph for every local problem \eqref{eq:localprob}.
This proved to be a huge improvement in running times, especially when the global alignment graph
is sparse. \natalie is publicly available at \url{http://planet-lisa.net}.

\section{Experimental Evaluation}
\label{sec:Experiments}

From the STRING database v8.3 \cite{Szklarczyk2010}, we obtained PPI networks
for the following six species:
\emph{C.~elegans} (cel), \emph{S.~cerevisiae} (sce), \emph{D.~melanogaster}
(dme), \emph{R.~norvegicus} (rno), \emph{M.~musculus} (mmu) and
\emph{H.~sapiens} (hsa). We only considered interactions that were
experimentally verified. Table~\ref{tbl:input} shows the sizes of the networks. 
We performed, using the BLOSUM62 matrix, an all-against-all global sequence
alignment on the protein sequences of all $\binom{6}{2} = 15$ pairs of networks. We used affine gap
penalties with a gap-open penalty of 2 and a gap-extension penalty of 10. The
first experiment in Section~\ref{sec:edgecorrectness} compares the raw
performance of \isorank, \graal and \natalie in terms of objective value. In
Section~\ref{sec:go} we evaluate the biological relevance of the alignments produced by
the three methods. 
All experiments were conducted on a compute cluster with 2.26 GHz
processors with 24 GB of RAM.

\begin{table}
  \center
  \begin{tabular}{lrrr}
    \hline
    \textbf{species} & ~\textbf{nodes}~ & ~\textbf{annotated}~ & ~\textbf{interactions}\\
    \hline
    cel (c) & 5,948   & 4,694 & 23,496\\
    sce (s) & 6,018  & 5,703& 131,701 \\
    dme (d) & 7,433 & 6,006 & 26,829  \\
    rno (r) & 8,002  & 6,786& 32,527  \\
    mmu (m) & 9,109  & 8,060& 38,414  \\
    hsa (h) & 11,512 & 9,328 & 67,858  \\
  \end{tabular}
  \caption{Characteristics of input networks considered in this
    study. The columns contain species identifier, number of nodes in
    the network, number of annotated nodes thereof, and number of interactions }
  \label{tbl:input}
\end{table}

\subsection{Edge-Correctness}
\label{sec:edgecorrectness}

The objective function used for scoring alignments in \graal counts the number
of mapped edges. Such an objective function is easily expressible in our
framework using $s(a) = |\{ (v,w) \in E_1 \mid (a(v),a(w)) \in E_2\}|$
and can also be modeled using the \isorank scoring function. In order to compare performance of the methods
across instances, we normalize the scores by dividing by $\min(|E_1|, |E_2|)$.
This measure is called the edge-correctness by Kuchaiev et al.\ \cite{Kuchaiev2010}.

As mentioned in Section~\ref{sec:method}, our method
benefits greatly from using a sparse alignment
graph.
To that end, we use the e-values
obtained from the all-against-all sequence alignment to prohibit biologically
unlikely matchings by only considering protein-pairs
whose e-value is at most 100. Note that this only applies to \natalie as both
\graal and \isorank consider the complete alignment graph. On each of the 15
instances, we ran \graal with 3 different random seeds and sampled the input
parameter which balances the contribution of the graphlets with the node
degrees uniformly within the allowed range of $[0,1]$. As for \isorank, when
setting the parameter $\alpha$|which controls to what extent topological similarity plays
a role|to the desired value of 1, very poor
results were obtained. Therefore we also sampled this parameter within its
allowed range and re-evaluated the resulting alignments in terms of
edge-correctness. \natalie was run with a time limit of 10 minutes and $K=3$, 
$L=100$, $M=10$, $N=20$. For both \graal and \isorank only the highest-scoring results were
considered.
\begin{figure}[tbp]
  \center
  \subfloat[Edge correctness]{\includegraphics[scale=0.8]{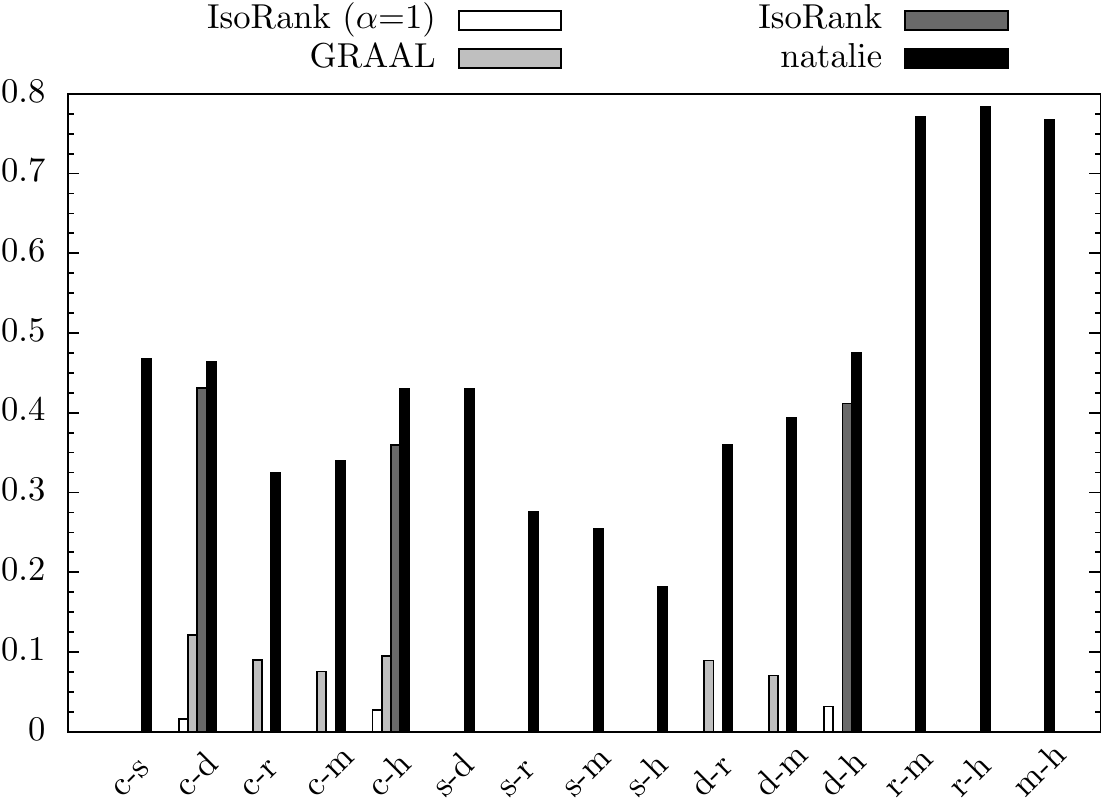}}\\
  \subfloat[Running times in seconds]{\includegraphics[scale=.8]{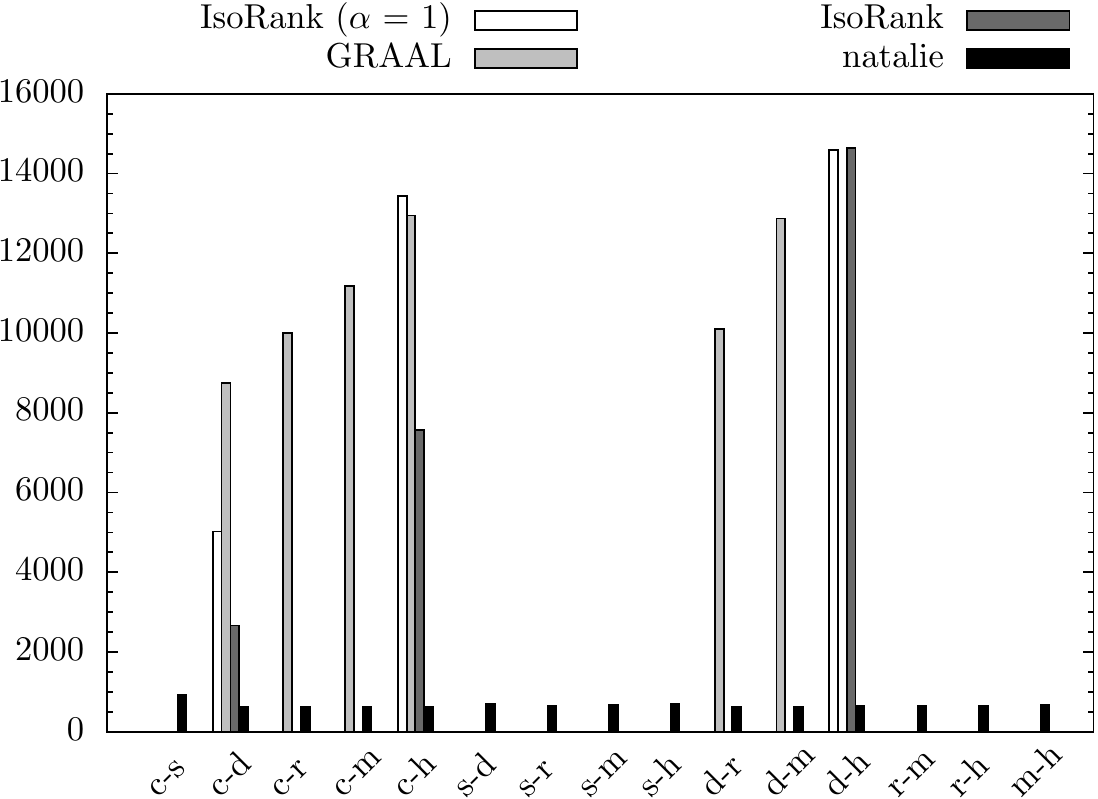}}
  \caption{Performance of the three different methods for the
    all-against-all species comparisons (15 alignment
    instances). Missing bars correspond to exceeded time/memory limits or software crashes.}
  \label{fig:ec}
\end{figure}

Figure~\ref{fig:ec} shows the results. \isorank was only able to compute alignments for three out of the 15 instances.
On the other instances \isorank crashed, which may be due to the large size of
the input networks. For \graal no alignments concerning \emph{sce} could be
computed, which is due to the large number of edges in the network
on which the graphlet enumeration procedure choked: in 12 hours only for
3\% of the nodes the graphlet degree vector was computed. As for the last three
instances, \graal crashed due to exceeding the memory limit inherent to 32-bit processes.
Unfortunately no 64-bit executable was available. On
the instances for which \graal could compute alignments, the performance|both in
solution quality and running time|is
very poor when compared to \isorank and \natalie. \natalie outperforms \isorank
in both running time and solution quality.

\subsection{GO Similarity}
\label{sec:go}
In order to measure the biological relevance of the obtained network alignments,
we make use of the Gene Ontology (GO) \cite{Ashburner2000}. For every node in
each of the six networks we obtained a set of GO annotations (see
Table~\ref{tbl:input} for the exact numbers). Each annotation set was extended
to a multiset by including all ancestral GO terms for every annotation in the
original set. Subsequently we employed a similarity measure that
compares a pair of aligned nodes based on their GO annotations and also takes
into account the relative frequency of each annotation \cite{Jaeger2010}.
Since the similarity measure assigns a score between 0 and 1 to every aligned
node pair, the highest similarity score one can get for any alignment is the
minimum number of annotated nodes in either of the networks.
Therefore we can normalize the similarity scores by this quantity.   
Unlike the previous experiment, this time we considered the bitscores of
the pairwise global sequence alignments. Similarly to \isorank parameter
$\alpha$, we introduced a parameter $\beta \in [0,1]$ such that the sequence part of the
score has weight $(1 - \beta)$ and the topology part has weight $\beta$.
For both \isorank and \natalie we sampled the weight parameters uniformly in the
range $[0,1]$ and showed the best result in Figure~\ref{fig:go}. There we can
see that both \natalie and \isorank identify functionally coherent alignments.

\begin{figure}[btp]
  \center
  \includegraphics[scale=.8]{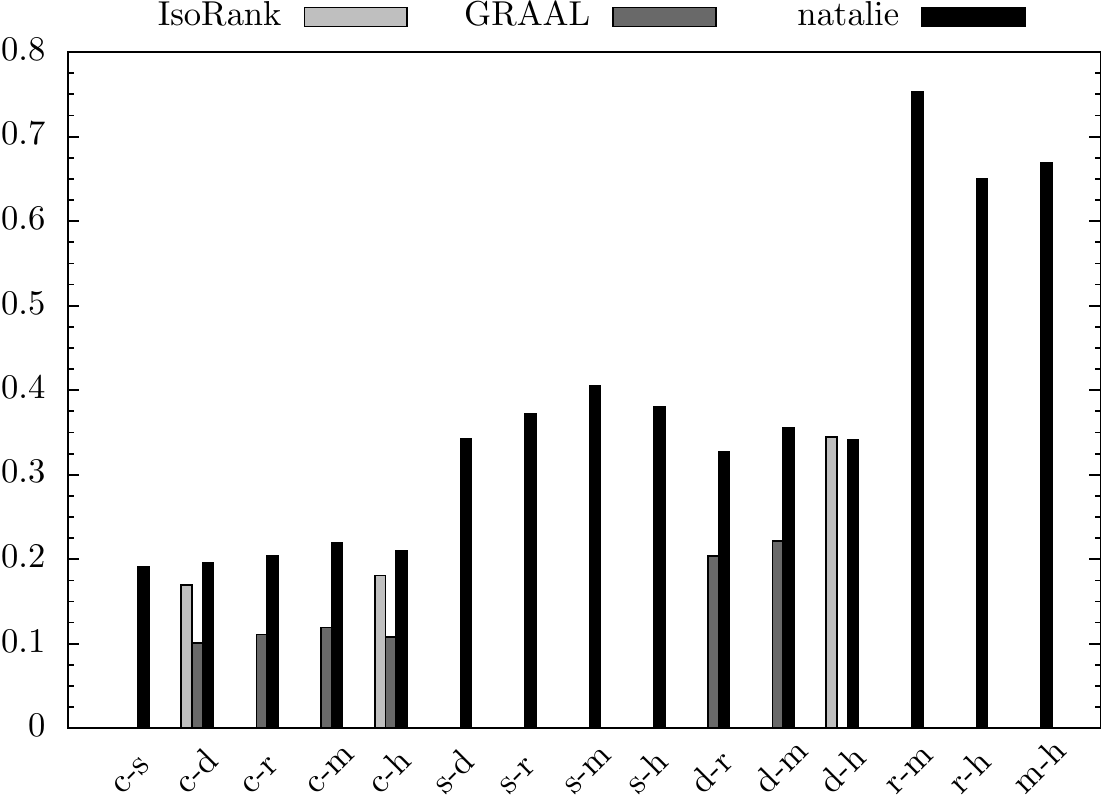}
  \caption{Biological relevance of the alignments measured via GO similarity}
  \label{fig:go}
\end{figure}

\section{Conclusion}
\label{sec:Conclusion}
\label{sec:Discussion}

Inspired by results for the closely related quadratic assignment
problem (QAP), we have presented new algorithmic ideas in order to
make a Lagrangian relaxation approach for global network alignment
practically useful and competitive. In particular, we have
generalized a dual descent method for the QAP. We have
found that combining this scheme with the traditional subgradient
optimization method leads to fastest progress of upper and lower
bounds.

Our implementation of the new method, \natnewversion, works very well
and fast when aligning biological networks, which we have shown in an
extensive study on the alignment of cross-species PPI networks. We have compared
\natnewversion to those state-of-the-art methods whose scoring
schemes can be expressed as special cases of the scoring scheme
we propose. Currently, these methods are \isorank and \graal. Our
experiments show that the Lagrangian relaxation approach is a very
powerful method and that it currently outperforms the competitors in
terms of quality of the results and running time. 

Currently, all methods, including ours, approach the global network
alignment problem heuristically, that is, the computed alignments are
not guaranteed to be optimal solutions of the problem. While the
other approaches are intrinsically heuristic---both \isorank and
\graal, for instance, approximate the neighborhood of a node and then
match it with a similar node---the inexactness in our methods has
two causes that we plan to address in future work: On the one hand,
there may still be a gap between upper and lower bound of the
Lagrangian relaxation approach after the last iteration. We can use
these bounds, however, in a branch-and-bound approach that will
compute provably optimal solutions.  On the other hand, we currently
do not consider the complete bipartite alignment graph and may
therefore miss the optimal alignment. Here, we will investigate
preprocessing strategies, in the spirit of
\cite{wohlers_proteins-2011}, to safely sparsify the input bipartite graph
without violating optimality conditions.

The independence of the local problems \eqref{eq:localprob} allows for easy
parallelization, which, when exploited would lead to an even faster method.
Another improvement in running times might be achieved when considering more
involved heuristics for computing the lower bound, such as local
search. 
More functionally-coherent
alignments can be obtained when considering a scoring function where
node-to-node correspondences are not only scored via sequence similarity but
also for instance via GO similarity. In certain cases, even negative weights for
topological interactions might be desired in which case one needs to reconsider
the assumption of entries of matrix $W$ being positive.

\begin{small}
\paragraph{Acknowledgments.} We thank SARA Computing and Networking Services
(\url{www.sara.nl}) for their support in using the Lisa Compute Cluster. In
addition, we are very grateful to Bernd Brandt for helping out with various
bioinformatics issues and also to Samira Jaeger for providing code and advice on
the GO similarity experiments.
\end{small}

\bibliographystyle{abbrv}
\bibliography{gna}



\end{document}